\documentclass[11pt]{article}
\usepackage[margin=1in]{geometry}
\usepackage{amsmath,amssymb,amsthm}
\usepackage{enumitem}
\usepackage{microtype}
\usepackage{hyperref}

\usepackage{makecell}
\usepackage{tablefootnote}

\usepackage[ruled,vlined,linesnumbered]{algorithm2e}
\SetKwComment{KwComment}{$\triangleright$\ }{}
\usepackage{tikz}
\usepackage{authblk}
\usepackage[numbers]{natbib}
\usetikzlibrary{arrows.meta,positioning,calc,fit,backgrounds,decorations.pathreplacing}

\newtheorem{definition}{Definition}
\newtheorem{theorem}{Theorem}
\newtheorem{corollary}{Corollary}[theorem]
\newtheorem{lemma}{Lemma}
\newcommand{\remove}[1]{}

\usepackage[ruled,vlined]{algorithm2e}

\title{Computing Least Fixed Points with Overwrite Semantics in Parallel and Distributed Systems}

\author[a]{Vijay K. Garg}
\author[b]{Rohan Garg}
\affil[a]{Department of Electrical and Computer Engineering, University of Texas at Austin}
\affil[b]{Department of Computer Science, Purdue University.

Emails: \texttt{{garg@ece.utexas.edu, rohang@purdue.edu}}}
\date{} 

\begin{document}
\bibliographystyle{alpha}
\maketitle

\begin{abstract}

We present methods to compute least fixed points of multiple monotone 
inflationary functions in parallel and distributed settings. While the classic
Knaster-Tarski theorem addresses a single function with sequential iteration, 
modern computing systems require parallel execution with overwrite semantics, 
non-atomic updates, and stale reads.
We prove three convergence theorems under progressively relaxed synchronization:
(1) Interleaving semantics with fair scheduling,
(2) Parallel execution with update-only-on-change semantics (processes write only on those
    coordinates whose values change), and 
(3) Distributed execution with bounded staleness (updates propagate within 
     $T$ rounds) and $i$-locality (each process modifies only its own component).

Our approach differs from prior work in fundamental ways: Cousot-Cousot's
chaotic iteration uses join-based merges that preserve information~\cite{cousot1977sigplan}. 
Instead, we use coordinate-wise overwriting.
Bertsekas's asynchronous methods assume contractions~\cite{bertsekas1983}. We use coordinate-wise 
overwriting with structural constraints (locality, bounded staleness) instead.
Applications include parallel and distributed algorithms for 
the transitive closure,
stable marriage, shortest 
paths, and fair division with subsidy problems. Our results provide the first exact least-fixed-point 
convergence guarantees for overwrite-based parallel updates without join 
operations or contraction assumptions.

\end{abstract}
\noindent\textbf{Keywords:}
fixed points, distributive lattices, parallel algorithms, asynchronous computation

\section{Introduction}

Fixed-point computation \citep{tarski1955,kleene1952,knaster1928} lies at the heart of many areas in computer science: logic, automata and optimization.
Given a complete lattice, a monotone function $f$ has a nonempty complete sublattice of fixed points. When $f$ is continuous, we can obtain the least fixed point by iterating $f$ on the bottom element of the lattice. While this formulation has been applicable in numerous contexts, a limitation is that the iteration is a sequential process and computing the fixed point requires as many steps as the number of iterations. In this paper, we propose fixed point methods on lattices that enable {\em parallel} computation of the fixed point.
Modern parallel and distributed systems increasingly operate under models where:
(1) updates overwrite state rather than merge it (e.g., key-value stores, 
parameter servers in machine learning),
(2) different workers maintain different components of the global state, and 
(3) communication delays mean workers operate on slightly stale data.
Classical fixed-point theorems do not directly address this setting.
This paper models a parallel or a distributed computation on $n$ cores (or $n$ machines) as computation of the
least {\em common} fixed point of $n$ functions.
We apply our method to many problems, such as the stable marriage problem and the shortest paths problem.

We begin with a fixed-point theorem with multiple monotone functions. Assuming interleaving semantics of
parallelism, the theorem gives a method to compute the least common fixed point. It assumes that
if there are two functions $f_1$ and $f_2$, then when they are computed in parallel, the net effect is
$f_1$ followed by $f_2$ or vice versa. This requires that the programmer uses appropriate synchronization 
for the interleaving guarantee. This result requires the functions to be inflationary, and the system uses
{\em fairness} in scheduling of the functions.

Then, we study when these functions are applied in the non--interleaving semantics of parallelism.
We add two constraints. First, instead of any lattice, we require the lattice to be distributive.
Many distributed (and some parallel) computations that admit consistent-cuts can be viewed as a poset of width $n$.
It is well known that the set of {\em ideals} of any finite poset forms a finite distributive lattice and
conversely every finite distributive lattice can be generated in this manner \cite{Birk3}.
In the context of parallel and distributed systems, we can rephrase this observation as the
set of consistent global states of a parallel or a distributed computation is a distributive lattice \cite{Matt:VTime}.
Second, we assume that a processor never issues a write for a coordinate that has not changed.
We call this assumption {\em update-only-on-change}. We show that the system converges to the least fixed point under these conditions.


Finally, we study when these functions are in a distributed system where the global state $G$ is distributed across multiple machines and any machine $i$ may have an older value of $G[j]$ for $j \neq i$. We now require the functions to satisfy an additional assumption called {\em bounded staleness}.
A distributed system with $n$ processes $P_1, P_2, \ldots, P_n$ maintaining the global variable
$G$ such that $G[i]$ is at $P_i$ satisfies {\em bounded staleness} if whenever $P_i$ reads
$G$ it gets the most recent value of $G[i]$ and a value of $G[j]$ that is at most $T$ rounds old.
We also require that functions satisfy an additional assumption called {\em locality}. 
If the functions are {\em local}, then the
system reaches the least common fixed point in spite of the distributed execution. 
A function $f_i$ is $i$-local if for all $G$ in the lattice $L$, $f_i(G)[j]$ is equal to $G[j]$ for $j \neq i$. Thus,
the function $f_i$ can read the value of $G$ but can change only $G[i]$.

\begin{table}[t]
\centering
\small
\renewcommand{\arraystretch}{1.2}
\begin{tabular}{|l|c|c|c|c|}
\hline
\textbf{Result} 
& \textbf{Lattice} 
& \textbf{Functions} 
& \textbf{Requirements}  
& \textbf{Scheduling} 
\\ \hline

Knaster--Tarski~\cite{tarski1955}
& Complete
& Single, monotone
& \makecell{None}
& N/A
\\ \hline

Kleene~\cite{kleene1952}
& Complete
& Single, $\omega$-continuous
& \makecell{None}
& Sequential
\\ \hline

Cousot--Cousot~\cite{cousot1977sigplan}
& Complete
& Multiple, monotone
& \makecell{Inflationary\\Join-based merge%
\tablefootnote{%
Updates are combined using the lattice join operator, which guarantees monotonic
global progress even under chaotic asynchronous execution.}}
& Chaotic (unordered) \tablefootnote{the applications of update functions in an arbitrary order, but updates are
combined using lattice joins}

\\ \hline

\makecell{Baudet~\cite{baudet1978},\\
Bertsekas~\cite{bertsekas1983}}
& Complete
& Multiple, monotone
& \makecell{Inflationary\\Asynchronous reads%
\tablefootnote{%
The operators act on the \emph{entire} global state and compute fixed points of a
single composite monotone mapping; locality or coordinatewise overwrite semantics
are not assumed.}}
& Fair asynchronous \tablefootnote{Overlapping updates with stale reads and overwrite semantics}
\\ \hline

Theorem~\ref{thm:interleaved} (Interleaved)
& Finite
& Multiple, monotone
& \makecell{Inflationary}
& Fair interleaving
\\ \hline

Theorem~\ref{thm:parallel} (Parallel)
& Finite Product 
& Multiple, monotone
& \makecell{Inflationary\\update-only-on-change}
& Fair parallel 
\\ \hline

Theorem~\ref{thm:distributed} (Distributed)
& Finite Product 
& Multiple, monotone
& \makecell{Inflationary\\$i$-local\\Bounded staleness}
& Strongly fair 
\\ \hline
\end{tabular}
\caption{Comparison of assumptions required by classical fixed-point theorems and the results of this paper.}
\label{tab:assumptions}
\end{table}

Table~\ref{tab:assumptions} highlights the structural and computational assumptions
underlying classical fixed-point theorems and contrasts them with the assumptions
required by our results.
Knaster--Tarski~\cite{knaster1928,tarski:fp} and Kleene~\cite{kleene1952} address the existence and computation of fixed points for
a \emph{single} monotone operator on a complete lattice, without reference to
parallel or distributed computing concerns.
Cousot and Cousot~\cite{cousot1977sigplan} consider multiple monotone functions on the lattice.
They guarantee convergence by enforcing idempotent join-based merge.
We guarantee convergence despite destructive overwrites.
The results of Baudet~\cite{baudet1978} and Bertsekas~\cite{bertsekas1983} consider \emph{multiple} monotone operators but rely
on contractions to ensure convergence under asynchrony. We do not assume that our functions are contractions.
We also refer the interested reader to \cite{FrommerSzyld2000} and references therein for a survey of such
asynchronous iterations.

Our theorems address a different and increasingly common computational model in which
updates overwrite state rather than merge it, and where locality, parallelism, and
distributed execution play a central role.
The table makes explicit how finiteness, inflationarity, coordinate locality, fairness,
and bounded staleness interact to guarantee convergence to the least common fixed point.


The term ``asynchronous'' is used with different meanings in the literature.
In abstract interpretation, chaotic iteration~\cite{cousot1977sigplan} refers to
the application of functions in an arbitrary order, but updates are
combined using lattice joins, ensuring a well-defined monotone global state.
In contrast, asynchronous scheduling in the sense of Baudet and Bertsekas allows
overlapping updates with stale reads and overwrite semantics, and generally cannot
be reduced to any sequential interleaving.
Our model departs from both by allowing overwrite-based parallel updates while
enforcing locality and bounded staleness to retain convergence guarantees.
To our knowledge, this is the first exact least-fixed-point convergence result 
for coordinate-wise overwrite semantics
that does not rely on join-based merging or contraction assumptions.

\remove{
\section{Finite Distributive Lattice}
In this section, we explore the fixed points of an inflationary function on a finite 
distributive lattice $L$. From the fundamental theorem of Birkhoff, we know that 
the lattice $L$ can be generated as the set of all ideals of a poset $P$.
The size of the lattice $L$ may be exponentially bigger than the size of the poset $P$.
Let the width of the poset $P$ be $n$. Then, the lattice may be exponential in the
size of $n$.
Let $P$ be a union of $n$ nonempty chains $P_1, P_2, \ldots, P_n$.  Then, any ideal $G$ in $L$
can be viewed as union of elements from $P_1$, to $P_n$. We let $G[i]$ denote the
largest element in $P_i$ (if any).
In all our applications, we will assume that the problem is specified for the poset $P$
and not the lattice $L$, i.e., the input of the program is the poset $P$.
We are also given an inflationary function $f$. Our goal is to find the ideal $G$ such that 
$f(G)$ equals $G$. 

To bring in the parallelism, we require that $f$ is not given as a simple function from 
$L$ to $L$. Instead, we define $n$ functions
$f_1, f_2, \ldots, f_n$ such that $f_i(G)$ is such that if it differs from $G$, then it differs only in $G[i]$.
These functions $f_1, f_2, \ldots, f_n$ can be applied in parallel.
Our goal is to determine the least $G$ in the lattice $L$ such that $\forall i:f_i(G)$ equals $G$.

We now give several examples to illustrate the notation.

\paragraph{Stable Marriage Problem} Suppose that there are $n$ men and $n$ women.
We consider a poset of $n$ chains where the chain $i$ corresponds to preferences of man $i$ in decreasing order.
Let $L$ be the lattice of all assignment of men to women. We start from the ideal that corresponds to the top choice for all
men. Given any assignment $G$, $f_i(G)$ equals $G$ whenever the woman corresponding to the $G[i]$ choice by man $i$
prefers man $i$ over all the proposal received in vector $G$ or less. Otherwise, $f_i(G)$ equals $H$, where
$(H[i] = G[i]+1)$ and $\forall j \neq i: H[j] = G[j]$.
The man-optimal marriage corresponds to the least common fixed point of all $f_i$.

\paragraph{Assignment Problem}
Suppose that there are $n$ items and $n$ bidders. 
Each bidder $b$ provides the non-negative valuation for each item $i$ as $v_{b,i}$. 
Our goal is to assign one item to each bidder so that the total valuation obtained is maximized. We follow the algorithm by Gale-Demange-Sotomayor for this problem.
We let $G[i]$ denote the price of the item $i$. 
Given a price vector $G$, we define the bipartite graph $(I, U, E(G))$ as
follows. One side of the bipartite graph is the set of items $I$. The other side of the graph is the set of bidders $U$.
We now add edges between items and the bidders as follows.
\[ (j,b) \in E(G) \equiv \forall i: (v_{b,j} - G[j]) \geq (v_{b,i} - G[i]). \]
Informally, there is an edge between item $i$ and bidder $b$ if the payoff for the bidder (the bid minus the price) is maximized with that item.
Given any set $U' \subseteq U$, let $N(U', G)$ denote all the items that are adjacent to
the vertices in $U'$ in the graph $(I, U, E(G))$.
A price vector $G$ is a {\em market clearing price}
 if the bipartite graph $(I, U, E(G))$ has a {\em perfect matching}. 
If the bipartite graph does not have a perfect matching, then we define
$f_i$ for each item as follows.
Given any price vector $G$, $f_i(G)$ equals $G$ whenever the item $i$ is not an over-demanded item. If the item $i$ is overdemanded, then
$f_i(G)$ equals $H$, where 
$(H[i] = G[i]+1)$ and $\forall j \neq i: H[j] = G[j]$.
The least price vector $G$ corresponds to the least common fixed point of all $f_i$.
}

\section{Fixed Points on Lattices Under Interleaving Execution}\label{sec:interleaving}

Let $L \subseteq \mathbb{N}^n$ be a finite lattice ordered with the bottom element $\bot$.
Let ${\cal F}$  be a set of $m$ functions from $L$ to $L$.
We study the convergence of computation when $\{f_i \in {\cal F} | 1 \leq i \leq m\}$ are applied 
to some element $x\in L$ one at a time in some order. We assume that all the functions 
in ${\cal F}$ are inflationary and monotone.

\begin{definition}[Inflationary]
  A function $f:L\to L$ is inflationary if 
  \[
  \forall G\in L,\ f(G)\ge G.
  \]  
\end{definition}

\begin{definition}[Monotone]
    A function $f:L\to L$ is  monotone if \[\forall G, H \in L: G \le H \Rightarrow f(G) \le f(H).\]
\end{definition}

We define the {\em least common fixed point} of ${\cal F}$ as
\[ G^* \;=\; \min \{\, G \in L \mid \forall i,\; f_i(G)=G \,\}  \]
Observe that the set of fixed points for each of the functions $f_i$ is a sublattice of $L$.
Hence, the set of common fixed points is an intersection of the sublattices and therefore 
also a sublattice $Z$. Since this sublattice $Z$ is nonempty (the $\top$ element belongs to each sublattice because each function is inflationary), the operator $\min$ used above is well-defined.

Throughout, we will assume that the scheduler that applies the functions $f_i$ is {\em fair}, i.e., it selects which $f_i$ to apply, at each step infinitely often~\cite{francez2012fairness}. This leads to the following result.
\begin{theorem}[Convergence to the least common fixed point under a fair schedule]
\label{thm:interleaved}
Let $ {\cal F} = \{f_i : L \to L \mid i = 1,\dots,m\}$ be a set of inflationary and monotone
functions.
Consider an execution starting from $G_0=\bot$, where at each step $t$, a
fair scheduler selects an index $i \in [m]$ and performs the atomic update
$G_{t+1} = f_{i}(G_t)$. 
Then, the execution terminates at a state $G^*$ which is the least common fixed point 
of ${\cal F}$.
\end{theorem}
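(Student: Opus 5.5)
The plan is to prove two invariants of the execution: the sequence $G_t$ is non-decreasing, and every $G_t$ lies below every common fixed point. Finiteness of $L$ then forces stabilization, and fairness shows that the stable value is a common fixed point.

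\emph{Step 1 (monotone ascent).} Each $f_i$ is inflationary, so $G_{t+1}=f_{i}(G_t)\ge G_t$ for every $t$. Hence $G_0\le G_1\le\cdots$ is a chain in $L$. Since $L$ is finite, the chain can increase strictly only finitely many times, at most $|L|-1$ times. So there are an index $t_0$ and an element $\hat G$ with $G_t=\hat G$ for all $t\ge t_0$.

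\emph{Step 2 (stable value is a common fixed point).} Fix $i\in[m]$. Because the scheduler is fair, $f_i$ is selected at some step $t\ge t_0$. At that step $\hat G=G_{t+1}=f_i(G_t)=f_i(\hat G)$, so $\hat G$ is a fixed point of $f_i$. As $i$ was arbitrary, $\hat G$ is a common fixed point of ${\cal F}$. Note that the set of common fixed points is nonempty and has a minimum $G^*$, as observed just before the theorem.

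\emph{Step 3 (never overshoot).} Let $H$ be any common fixed point. I show by induction that $G_t\le H$ for all $t$. The base case holds because $G_0=\bot\le H$. For the inductive step, if $G_t\le H$, then monotonicity gives $G_{t+1}=f_i(G_t)\le f_i(H)=H$. Taking $H=G^*$ gives $\hat G\le G^*$. Combined with Step 2 and the minimality of $G^*$, we get $\hat G=G^*$.

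\emph{Termination.} The execution "terminates" in the sense that from $t_0$ on no function changes the state. Once every $f_i$ has been applied at $\hat G$, the state is visibly a common fixed point, and no further change can ever occur.

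The main subtlety is Step 2. Fairness must be used in the form "each $f_i$ is selected infinitely often" (or at least once after stabilization), not merely "some function is selected at each step". The argument should state this reading explicitly. The remaining steps are routine.
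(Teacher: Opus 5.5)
Your proof is correct and follows essentially the same route as the paper's: inflationarity plus finiteness gives an ascending chain that stabilizes, fairness applied after stabilization makes the stable value a common fixed point, and induction via monotonicity keeps every $G_t$ below every common fixed point. The differences are cosmetic. You argue Step 2 directly rather than by contradiction, and you compare with the a priori minimum $G^*$; your Step 3 for arbitrary $H$ already shows that $\hat G$ is that minimum, so you do not need the pre-theorem observation.
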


\begin{proof}
Since the schedule applies $G_{t+1}=f_{i}(G_t)$ and each $f_{i}$ is inflationary,
we have $G_t \le G_{t+1}$ for all $t$. Hence,
\[
G_0 \le G_1 \le G_2 \le \cdots
\]
is an ascending chain. Because $L$ is finite, the chain stabilizes: there exists
$T$ such that $G_T=G_{T+1}=\cdots$. Let $G^*:=G_T$.

We claim $G^*$ is a common fixed point. Suppose not; then for some $i$,
$f_i(G^*)\neq G^*$. Since the scheduler is fair, index $i$ is selected at some time
$t\ge T$, and then
\[
G_{t+1}=f_i(G_t)=f_i(G^*)\neq G^*,
\]
contradicting that the sequence is constant from time $T$ onward. Hence
$f_i(G^*)=G^*$ for all $i$.

Now let $Y$ be any common fixed point. We show by induction that $G_t \le Y$ for all $t$.
The base case holds since $G_0=\bot\le Y$. If $G_t\le Y$, then by monotonicity,
\[
G_{t+1}=f_{i}(G_t)\le f_{i}(Y)=Y.
\]
Thus $G^*\le Y$. Therefore, $G^*$ is the least common fixed point.
\end{proof}

\noindent
We now examine the assumptions of Theorem~\ref{thm:interleaved}:
(1) $L$ is finite, (2) The scheduler is fair, (3) $G_0$ is the bottom element of $L$, (4)
Each $f_i$ is monotone, and (5) Each $f_i$ is inflationary. We show that these assumptions are necessary in a weak sense: if any assumption is not met, there is a counterexample that results in a failed termination or non-existence of a least fixed point. The proof of Theorem~\ref{thm:assumptions} is deferred to the Appendix.


\begin{theorem}{\label{thm:assumptions}}
Consider executions of the form 
$G_{t+1} = f_{i}(G_t)$,
where at each step a scheduler selects an index $i \in [m]$ and applies the corresponding update
function $f_{i} : L \to L$. 
For each assumption $A$, there exists a lattice $L$,
a collection of functions $\{f_1,\dots,f_m\}$, an initial state $G_0$, and a schedule $S$ such that
all assumptions except $A$ are satisfied, yet the resulting execution either
(i) fails to terminate, or
(ii) terminates at a state that is not a common fixed point, or
(iii) terminates at a common fixed point that is not the least common fixed point.
\end{theorem}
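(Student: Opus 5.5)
The proof is by explicit counterexample, one for each of the five assumptions (1)--(5) listed after Theorem~\ref{thm:interleaved}. In each case all other assumptions hold, and we exhibit a small lattice, functions, start state and schedule with one of the failure modes (i)--(iii). Most examples live on a chain $\{0,1,\dots,k\}$ or $\mathbb{N}$ with $m\le 2$.

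\emph{(1) Finiteness.} Take $L=\mathbb{N}\cup\{\infty\}$, a complete chain, with the single function $f(x)=x+1$ and $f(\infty)=\infty$. It is monotone and inflationary, the schedule is trivially fair, and $G_0=0$. The execution $0,1,2,\dots$ never terminates, giving (i). If we want $L\subseteq\mathbb{N}^n$ literally, use $L=\mathbb{N}$ itself: there $f$ has no fixed point at all, so the chain still never stabilizes.

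\emph{(2) Fairness.} Take the chain $\{0,1\}$ with $f_1=\mathrm{id}$ and $f_2(x)=1$. An unfair scheduler that always picks $f_1$ leaves the state constant at $0$, so it "terminates" at $0$. But $f_2(0)\neq 0$, so $0$ is not a common fixed point, giving (ii).

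\emph{(3) Bottom start.} Take the chain $\{0,1\}$ with $f=\mathrm{id}$. Its least fixed point is $0$. Starting at $G_0=1$, the execution stays at the fixed point $1$, which is not least, giving (iii).

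\emph{(4) Monotonicity.} Take the chain $\{0,1,2\}$ and a single function with $f(0)=2$, $f(1)=1$, $f(2)=2$. It is inflationary but not monotone, since $0\le 1$ while $f(0)=2\not\le 1=f(1)$. Starting from $\bot=0$, the execution reaches the fixed point $2$, but the least fixed point is $1$, giving (iii).

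\emph{(5) Inflationarity.} Take the chain $\{0,1\}$ and a single monotone function $f(x)=0$. Starting at $\bot$, the execution stays at the least fixed point, so this does not fail. Starting at $\bot$ is what makes the non-inflationary case hard. The plan is to use non-termination from oscillation instead. Take the four-element lattice $\{0,1\}^2$ with $f_1(a,b)=(b,b)$ and $f_2(a,b)=(a,a)$, both monotone. Their common fixed points are $(0,0)$ and $(1,1)$, so the least common fixed point is $\bot$, and starting from $\bot$ the run stays there. Again no failure. The key observation is that by the induction in the proof of Theorem~\ref{thm:interleaved}, monotonicity alone keeps every $G_t\le Y$ for any common fixed point $Y$. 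So when the start is $\bot$, a failure must come from termination at a non-fixed point or from non-termination.

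This step is the main obstacle: without inflationarity, fair schedules started from $\bot$ may never stabilize. To force cycling, use $L=\{0,1,2\}$ with two monotone maps: $f_1$ sends everything to $1$, and $f_2(0)=0$, $f_2(1)=0$, $f_2(2)=2$. Then $f_1$ and $f_2$ have no common fixed point, and the run $0\to1\to0\to1\cdots$ under the alternating schedule $f_1,f_2,f_1,\dots$ is fair and never terminates, giving (i). Since the theorem only requires that some failure mode occur, this suffices. If the convention requires a common fixed point to exist, I would fall back on the weaker reading and start at a non-bottom element.

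Finally, I would check that each example meets every other assumption, especially that each schedule is fair.
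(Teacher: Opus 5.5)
Your proposal is correct and follows the paper's approach: one explicit counterexample per assumption, built on the same mechanisms. These are $x+1$ on $\mathbb{N}$, an unfair schedule that never applies a needed function, an inflationary non-monotone map that jumps past the least fixed point, and monotone non-inflationary maps that oscillate with no common fixed point. Your identity-map example for a non-bottom start is more concrete than the paper's verbal remark.

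Your closing fallback of starting above $\bot$ is unnecessary, and it would also break the $G_0=\bot$ assumption. Redefining $f_1(2)=2$ in your last example makes $2$ a common fixed point, while the run $0\to1\to0\to\cdots$ still never terminates.
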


\subsection{Transitive Closure Example}
As a simple application of Theorem~\ref{thm:interleaved}, consider the problem of computing
the transitive closure of a directed graph.
Let $V=\{1,\dots,n\}$ be the set of vertices.
For each ordered pair $(a,b)\in V\times V$, introduce a Boolean variable
$R[a,b]\in\{0,1\}$ indicating whether $b$ is reachable from $a$.
The global state is 
$R \in L := \{0,1\}^{n^2}$,
ordered componentwise. $L$ is a finite product lattice with bottom element
$\bot$.

The initial state $R_0$ is defined by
\[
R_0[a,b]=1 \quad\text{iff}\quad (a,b)\in E \text{ or } a=b.
\]

For each pair $(a,b)\in V^2$, define a function
$f_{a,b}:L\to L$ by
\[
f_{a,b}(R)[x,y] :=
\begin{cases}
R[a,b] \,\lor\, \bigl(\exists k: R[a,k]\land R[k,b]\bigr),
& \text{if } (x,y)=(a,b),\\[1mm]
R[x,y], & \text{otherwise}.
\end{cases}
\]




Each $f_{a,b}$ is monotone and inflationary: it only changes coordinate
$(a,b)$, and does so by replacing $R[a,b]$ with a disjunctive expression that can only
turn a $0$ into a $1$.

Consider an execution starting from $R_0$ in which, at each step, a fair
scheduler selects a pair $(a,b)$ and applies the update
$R_{t+1} = f_{a,b}(R_t)$.
By Theorem~\ref{thm:interleaved}, this execution terminates at the least common
fixed point of the family $\{f_{a,b}\}$.
At the fixed point, the condition
\[
R[a,b] = R[a,b] \lor (R[a,k]\land R[k,b]) \quad \text{for all } k,a,b
\]
holds, which is equivalent to transitivity of reachability.
Therefore, the least common fixed point reached by the execution is exactly
the transitive closure of the graph.







\section{Fixed Point Theorem on Finite Distributive Lattices under Non--Interleaving Parallelism}


We now consider the application of multiple $f_i$ functions in parallel without the interleaving semantics.
Theorem~\ref{thm:interleaved} assumes that at each step a \emph{single} function
$f_i$ is applied atomically, i.e., $G_{t+1}=f_i(G_t)$.
If instead multiple processes apply different functions \emph{in parallel} using
a naive ``read--compute--write'' rule, the computation can fail: the execution may
not terminate at a common fixed point (even when all $f_i$ are monotone and
inflationary).
In one parallel round, each process $i$ reads the same old state $G$,
computes $f_i(G)$, and then all processes write their results concurrently.
In particular, a later write may overwrite earlier writes with stale data.
Note that each process writes all coordinates of its computed vector, not just its own.
Thus, if writes are not coordinated, the resulting state can be an arbitrary
interleaving of coordinate-wise overwrites.
Equivalently, the last-writer-wins on each coordinate; the outcome is a nondeterministic
interleaving of writes.
To see this, let $L=\{0,1\}^2$ with $\bot=(0,0)$. Define two functions $f_1,f_2:L\to L$ by
\[
f_1(x_1,x_2)=(1,x_2),\qquad f_2(x_1,x_2)=(x_1,1).
\]
Both $f_1$ and $f_2$ are monotone and inflationary.

Starting from $G_0=(0,0)$, consider one parallel round where process~1 applies $f_1$
and process~2 applies $f_2$ to the \emph{same} old state $G_0$:
\[
f_1(G_0)=(1,0),\qquad f_2(G_0)=(0,1).
\]
Under the naive parallel write semantics, it is possible that the final written
state is
\[
G_1=(0,0)
\]
(e.g., process~1 writes coordinate~1 first, then process~2 overwrites coordinate~1
with $0$ from its computed state, and similarly for coordinate~2).
Thus, the system can remain stuck at $(0,0)$ forever:
\[
(0,0)\ \Rightarrow\ (0,0)\ \Rightarrow\ (0,0)\ \Rightarrow\ \cdots
\]

However, $(0,0)$ is \emph{not} a common fixed point:
\[
f_1(0,0)=(1,0)\neq(0,0),\qquad f_2(0,0)=(0,1)\neq(0,0).
\]
Hence, in this parallel execution the computation does not converge to a common
fixed point, contradicting the conclusion of Theorem~\ref{thm:interleaved}.
The counterexample above shows that the interleaving convergence guarantee does not extend
to naive parallel writes.
This behavior is prevented by the {\em update-only-on-change} assumption used in Theorem \ref{thm:parallel}.


Observe that the functions overwrite the vector with the last-writer-wins per coordinate.
If we naively allow both processes to write to all 
coordinates with last-writer-wins semantics—then convergence can fail.
We had assumed that a processor writes the value
$H[i]$ {\em even if $G[i]$ equals $H[i]$}. It was this reason that under non-interleaving semantics, $G_1$'s both components got the value $0$. 
We now assume that a function $f_i$ performs a write for component $i$ only if that
$i$'s component value has changed.
We introduce the  {\em update-only-on-change} model.
\begin{definition}[Update-only-on-change]
A processor satisfies update-only-on-change model if it issues an update on the component
$G[i]$ only if it has changed from what it read.
Equivalently, the execution of $f$ on $G$ is implemented as follows:
Let $H = f(G)$. 
For all $i$, if $(G[i] \neq H[i])$ issue a write on the $i^{th}$ component.
\end{definition}
From an implementation perspective, this assumes that any $f$ has access to the old value of its argument. When the new value is changed it can check if the value has changed and issues writes
only on the changed values.
We are assuming that the lattice $L$ is finite and each component of $G$ is simply an integer.
Thus, we do not have to worry about checking equality for two floating-point numbers.





We define \emph{non--interleaving semantics} as follows.
Execution proceeds in sequential rounds $t=0,1,2,\dots$ with committed states $G_t$.
In round $t$, a set of functions $\{f_i : i\in S_t\}$ is executed in parallel.
Each function $f_i$ may read coordinates of the shared state at different times during the
round; hence some coordinates may be read
before other functions' writes take effect.

The writes of the functions may overlap and can take effect in any order; if multiple
functions write to the same coordinate, the final value at the end of the round is determined
by the last write. The round completes once all started functions finish and their writes
have taken effect, yielding the next committed state $G_{t+1}$.
In this model, when a {\em read} overlaps with any number of {\em writes} it can return the value prior to all
the writes or after any of the writes. Similarly, when multiple writes overlap, then the final
value returned could be any of the {\em writes}. In particular, we do not require memory consistency
guarantees such as {\em sequential consistency} \cite{Lamport79a} or linearizability \cite{Herlihy-Wing90}.
The {\em update-only-on-change} model assumes that a function writes to coordinate $k$ if and only if its computed value for coordinate $k$ differs from the value it read for coordinate $k$.

A non--interleaving execution can be viewed as a computation where
each round starts from a committed global state, and processes may observe partial
effects of other writes within the same round, but no process ever observes a
state older than the round's starting state.
Although writes may overwrite each other, update-only-on-change ensures that no write can decrease a coordinate. 

We also assume that our lattices are now distributive. If the parallel or distributed computation is modeled as a finite poset of width $n$, then the set of consistent
global states forms a finite distributive lattice \cite{Birk3}.
Let the poset $P$ with $n$ chains be written as $P_1, P_2, \ldots, P_n$.
If the lattice $L$ is generated from $P$,
any element $G \in L$ can be viewed with $G[i]$ as the number of elements
in $P_i$. Since we will be using this representation of the lattice, distributivity is assumed in the rest of the paper. For simplicity, we will assume that our lattice is simply a finite lattice of vectors of natural numbers such that each component $G[i]$ of the vector is a natural number between $0$ and $m$.

\begin{theorem}[Parallel convergence with non-interleaving semantics]
\label{thm:parallel}
Let $L$ be a finite lattice of vectors $\mathbf{N}_m^n$ ordered with the usual bottom element $\bot = (0,0, \ldots, 0)$ with $\leq$ defined as the component-wise comparison.
For each $i\in\{1,\dots,n\}$, let $f_i:L\to L$ be monotone and inflationary.
Consider a parallel execution starting from $G_0=\bot$ in which, at each round $t$,
a scheduler selects a set $S_t\subseteq\{1,\dots,n\}$ and the next state is defined by
$G_{t+1} :=$ the global state reached after any parallel execution of $f_i$ on $G_t$ where $i \in S_t$.
Assume the scheduler is fair: each index $i$ belongs to $S_t$ for infinitely many rounds.
The execution assumes the update-only-on-change model.
Then, the execution terminates at a state $G^*$ which is the least common fixed point 
of ${\cal F}$.
\end{theorem}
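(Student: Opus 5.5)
The plan follows the three-part structure of the proof of Theorem~\ref{thm:interleaved}: monotone ascent of the committed states, stabilization plus fairness to obtain a common fixed point, and an inductive upper bound by every common fixed point. The new work is to control what a function can \emph{read} and \emph{write} inside a non-interleaved round. The central invariant is the following. Fix a round $t$. During that round, every value ever present in coordinate $k$ of shared memory lies between $G_t[k]$ and $Y[k]$, for every common fixed point $Y \ge G_t$. I would prove this by induction over the write events of the round, ordered in real time.

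\textbf{Step 1 (reads are sandwiched).} Suppose process $i$ reads a vector $H$, possibly assembled from different instants. Each $H[k]$ is either the committed value $G_t[k]$ or the value of some write to coordinate $k$ issued earlier in the round. Since $L=\mathbf{N}_m^n$ is the full product, $H\in L$, so $f_i(H)$ is defined. By the inductive hypothesis, $G_t \le H \le Y$ componentwise.

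\textbf{Step 2 (writes are sandwiched).} By update-only-on-change, a write to coordinate $k$ is issued only when $f_i(H)[k]\neq H[k]$. Inflationarity then gives $f_i(H)[k] > H[k] \ge G_t[k]$. Monotonicity gives $f_i(H)[k] \le f_i(Y)[k] = Y[k]$. This closes the induction.

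Consequently, $G_{t+1}[k]$ is either $G_t[k]$ (no write) or the last write to coordinate $k$, which is strictly larger than $G_t[k]$. Hence $G_t \le G_{t+1}$, and starting from $G_0=\bot\le Y$ we get $G_t\le Y$ for all $t$.

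\textbf{Step 3 (stabilization and fixed point).} Because $L$ is finite, the chain $G_0\le G_1\le\cdots$ stabilizes at some $G^*=G_T$. Suppose $f_i(G^*)\neq G^*$ for some $i$. By fairness, $i\in S_t$ for some $t\ge T$. In that round no write can occur at all: by Step~2, any write to coordinate $k$ has value strictly greater than $G^*[k]$. The last such write would then make $G_{t+1}[k]>G^*[k]$, contradicting stabilization. With no writes in the round, every read returns exactly $G^*$. So process $i$ computes $f_i(G^*)\neq G^*$ and, by update-only-on-change, must write some coordinate, again a contradiction. Thus $G^*$ is a common fixed point. Combined with $G^*\le Y$ for every common fixed point $Y$, this shows $G^*$ is the least common fixed point.

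The main obstacle is Step~1 together with the induction over write events. Reads may mix pre-round values with partially propagated writes, and writes may be reordered arbitrarily. The argument therefore has to be phrased over a real-time ordering of all read and write events in the round, not over rounds, with the invariant ``every coordinate value ever visible lies in $[G_t[k],Y[k]]$'' maintained event by event. Strictness of each write ($>H[k]$) is what rules out overwriting with stale smaller values. It is exactly what failed in the counterexample preceding the theorem, where writes of unchanged coordinates restored old values.
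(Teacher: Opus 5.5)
Your proof is correct and takes essentially the paper's route: the round-progress lemma (every write in round $t$ is strictly above $G_t[k]$, by inflationarity plus update-only-on-change), then stabilization plus fairness to get a common fixed point, then induction showing $G_t\le Y$ for every common fixed point $Y$. Your event-by-event induction over writes, and your observation that in a post-stabilization round no write can occur, so process $i$ reads exactly $G^*$, make rigorous two points the paper only asserts: that every value read in round $t$ lies between $G_t$ and $Y$, and that $f_i$ actually computes $f_i(G^*)$.
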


Before we give the proof,  we first show the following Lemma.
\begin{lemma}[Monotone round progress]
\label{lem:round-progress}
Consider a round $t$ starting from the committed state $G_t$ under the
non--interleaving semantics.
Assume that each function $f_i$ is inflationary.
Assume further that updates are performed only when the computed value
differs from the value read.

Then for every coordinate $j$:
\begin{enumerate}
\item Every write to coordinate $j$ during round $t$ writes a value
      greater than or equal to $G_t[j]$.
\item The committed value $G_{t+1}[j]$ satisfies $G_{t+1}[j]\ge G_t[j]$.
\item Moreover, $G_{t+1}[j]=G_t[j]$ if and only if no function writes
      coordinate $j$ during round $t$.
\end{enumerate}
\end{lemma}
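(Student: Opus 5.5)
The plan is to track, for a fixed coordinate $j$, every value that can ever be \emph{observed} in coordinate $j$ during round $t$. A value is observed either as the committed starting value $G_t[j]$ or as a value carried by some write to $j$. I will show that all such values are $\ge G_t[j]$, and that every written value is in fact strictly greater. Parts~2 and~3 then follow directly from the semantics of how the committed value $G_{t+1}[j]$ is determined.

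For part~1, fix a write to coordinate $j$ issued by some function $f_i$ during round $t$. Let $G'$ denote the vector that $f_i$ actually read; it may be a mixture of coordinates read at different times. Since $L=\mathbf{N}_m^n$ contains every such vector, $G'\in L$, so $H:=f_i(G')$ is defined and inflationarity gives $H[j]\ge G'[j]$. By the update-only-on-change assumption, the write happens only if $H[j]\neq G'[j]$, hence $H[j]>G'[j]$.

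It then remains to show $G'[j]\ge G_t[j]$. By the non--interleaving read semantics, $G'[j]$ is either the value prior to all writes in the round, namely $G_t[j]$, or the value of some other write to $j$ in round $t$. I will argue by induction over writes ordered by the time their values are computed. A read can only return the value of a write whose value has already been computed, and $f_i$ computes $H$ only after finishing its reads. So the write whose value $f_i$ read for coordinate $j$ strictly precedes $f_i$'s write in this order, and by the induction hypothesis its value is $\ge G_t[j]$ (indeed $>G_t[j]$). Either way $G'[j]\ge G_t[j]$, so every write to $j$ carries a value $H[j]>G_t[j]$. I expect this well-foundedness/causality point to be the main obstacle: the circularity between reads and overlapping writes has to be broken explicitly. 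The finite number of functions in $S_t$, each performing finitely many writes, makes this induction legitimate.

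For part~2, the committed value $G_{t+1}[j]$ is either $G_t[j]$, if no write to $j$ occurred, or the value of one of the writes to $j$, namely the last one to take effect. In both cases part~1 gives $G_{t+1}[j]\ge G_t[j]$. For part~3, if no function writes $j$, the coordinate is untouched and $G_{t+1}[j]=G_t[j]$. Conversely, if some function writes $j$, then $G_{t+1}[j]$ equals the value of some write, and by the strict inequality established above that value exceeds $G_t[j]$, so $G_{t+1}[j]\neq G_t[j]$. This proves the equivalence.
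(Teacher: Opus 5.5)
Your proof is correct and follows the same route as the paper. In both, inflationarity plus update-only-on-change makes every write strictly exceed the value read for coordinate $j$, that value is at least $G_t[j]$, and parts 2 and 3 then follow from the last-writer-wins semantics; the only difference is that the paper simply asserts ``since we are in round $t$, $v\ge G_t[j]$,'' whereas your induction over writes in causal order, together with your observation that a mixed read vector still lies in $L=\mathbf{N}_m^n$, actually justifies that step, which is otherwise circular when $v$ is the value of another write in the same round.
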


\begin{proof}(Lemma~\ref{lem:round-progress})
Fix a coordinate $j$.
Let a function $f_i$ read value $v$ for coordinate $j$ during round $t$.
Since we are in round $t$, $v \ge G_t[j]$.
Since $f_i$ is inflationary, the value $v'$ computed by $f_i$ for coordinate
$j$ satisfies $v' \ge v \ge G_t[j]$.
Thus every write to coordinate $j$ during the round writes a value at least
$G_t[j]$, proving (1).

Because the committed value $G_{t+1}[j]$ is determined by one of the writes
during the round (or remains unchanged if there are no writes),
it follows that $G_{t+1}[j]\ge G_t[j]$, proving (2).

Finally, if $G_{t+1}[j]=G_t[j]$ and some function had written coordinate $j$,
then by the update--only--on--change rule that write would have been strictly
greater than the value read, and hence strictly greater than $G_t[j]$,
contradicting $G_{t+1}[j]=G_t[j]$.
Conversely, if no function writes coordinate $j$ during the round,
then $G_{t+1}[j]=G_t[j]$ by definition of the execution model.
\end{proof}

\begin{proof} (Theorem \ref{thm:parallel})
We first show that the state sequence is monotone and stabilizes.
Fix a round $t$ and let $S_t$ be the set selected by the scheduler.
From Lemma \ref{lem:round-progress}, we get that $G_{t+1} \ge G_t$ componentwise.
Therefore, the sequence
\[
G_0 \le G_1 \le G_2 \le \cdots
\]
is an ascending chain in the finite lattice $L$. Hence, it stabilizes:
there exists $T$ such that $G_T=G_{T+1}=\cdots$. Let $G^*:=G_T$.

We now show that $G^*$ is a common fixed point.
We first show that for every $i$, $f_i(G^*)=G^*$. Suppose not. Then $f_i(G^*)[j]\ne G^*[j]$ for some $j \in [n].$
Because $f_i$ is inflationary,
this implies $f_i(G^*)[j]>G^*[j]$.
By fairness, there exists a round $t\ge T$ with $i\in S_t$. Since $G_t=G^*$ for all
$t\ge T$, the update rule gives
\[
G_{t+1}[j] > G_t[j],
\]
If any write to coordinate $j$ occurs during the round $t$, Lemma \ref{lem:round-progress} guarantees
that the committed value $G_{t+1}[j] > G_t[j]$ regardless of which functions's write wins.

This case arises when we have overlapping writes 
and a write by some other function may have won. 
Although overlapping writes may cause the committed value $G_{t+1}[j]$ to differ
from $f_i(G^*)[j]$, Lemma~\ref{lem:round-progress} guarantees that any committed
write to coordinate $j$ during round $t$ is strictly greater than $G_t[j]$.
Hence $G_{t+1}[j] > G_t[j]$, contradicting the assumption that the execution has
stabilized at $G^*$.
Thus, $G^*$ is a common fixed point.

We now show that $G^*$ is the least common fixed point.
Let $Y$ be any common fixed point, i.e., $f_i(Y)=Y$ for all $i$.
We prove by induction that $G_t\le Y$ for all $t$.
The base case holds since $G_0=\bot\le Y$.
Now suppose that we are at round $t$ and the state is $G_t$.
Let $S_t$ be the set of functions applied in the round $t$
and $G_{t+1}$ be the global state reached.
Suppose a function $f_i$ in $S_t$ reads a vector $X$. Notice that if $G_t \leq Y$ and $Y$ is a common fixed point, then all writes in round $t$ are $\leq Y$ and so the read vector $X \leq Y$.
Then by monotonicity, for all $f_i$
\[
f_i(X) \le f_i(Y)=Y,
\]
Therefore, the committed last-writer-wins state at end of round also satisfies $G_{t+1} \le Y$.
Hence, we obtain that  $G^*\le Y$.
Thus, $G^*$ is the least common fixed point of $\mathcal{F}$.
\end{proof}

\subsection{Transitive Closure Example}
Continuing with our example on  transitive closure, we now analyze the behavior on a parallel computer in the 
non-interleaving model.
From Theorem \ref{thm:parallel}, we obtain that the following program computes the transitive closure of $R$.
The function is shown as Algorithm \ref{alg:transitive2}.

\begin{algorithm}[ht]
\label{alg:transitive2}
\caption{Function $f_{(a,b)}$ for reachability update: The function is not synchronized and multiple threads may be executing it concurrently.}
\DontPrintSemicolon

\KwIn{indices $a,b$}
\KwData{$R[1\ldots n,1\ldots n]$}

\lIf{$R[a,b] = 1$}{
    \Return
}

 // $k$ is a local variable
 
\For{$k \gets 1$ \KwTo $n$}
    {\If{$R[a,k] \wedge R[k,b]$}{
        $R[a,b] \gets 1$\;
        \Return
    }
}

\end{algorithm}


First, observe that the function is not synchronized. Thus, multiple threads may be running them concurrently
without any synchronization. The variable $R$ is shared among threads and the variable $k$ is local to the function. 
The program also satisfies update-only-on-change. Only function $f_{a,b}$ can write $1$ on $R[a,b]$. It does that only if the prior value was $0$.
From Theorem \ref{thm:parallel}, we obtain that the program will yield the transitive closure of the
matrix $R$.
If all $n^2$ threads run in every round, then the algorithm will reach the least fixed point in
$O(\log n)$ rounds because in round $i$, $R[a,b]$ equal to $1$ if there is a path of length 
$2^i$. Note that the parallel round complexity holds only when every thread runs in every round.

The reachability example used only reads and writes.
As shown by Herlihy \cite{Herlihy:1988a}, we need objects with higher consensus number when 
we build objects under lock-free synchronization. We will use the compare-and-set operation on
the components of $G$ whenever necessary. For example, consider the simple task of finding the number of components in an array $A$ of size $m$ that are greater than $c$. We define a lattice $L = \{0,1\}^m \times \{0,\ldots, m\}$.
The first $m$ components indicate which entries from $1..m$ have been checked and 
the last component indicates the number of entries that are greater than $c$.
The lattice $L$ is ordered component wise.
Algorithm \ref{alg:useOfCAS} gives the code for $f_i$ for $i \in [1\ldots m]$.
In this example, if compare-and-set (CAS)\footnote{We use $CAS(a,b,c)$ to denote a single atomic instruction that checks if $a = b$ and if so, then $CAS$ returns true and sets $a$  to $c$. Otherwise, $CAS$ returns false.} fails for $f_i$, then $G.checked[i]$ is false, and is checked again
in a future round.
Note that although threads run concurrently, there is at most one active execution of $f_i$.

\begin{algorithm}[ht]
\label{alg:useOfCAS}
\caption{The function $f_i$ is not synchronized and multiple threads may be executing it concurrently.}
\DontPrintSemicolon

\KwIn{index $i$}
\KwData{G: struct \{ $checked$: array[$1..m$] of Atomic Boolean initially false;\\ \hspace*{1.2in} $sum$: Atomic Integer initially $0$;\} }
\lIf {$G.checked[i]$}{\Return}
\lIf{$(A[i] \leq c)$}{ \{ $G.checked[i] \gets true;$ \Return \} }
int $temp \gets atomicRead(G.sum)$\\
\lIf{$CAS(G.sum, temp, temp+1)$}{
     $G.checked[i] \gets true$
}
// if $CAS$ returns false, then the function $f_i$ is computed again\\
\end{algorithm}

\section{Fixed Point Theorem on Finite Distributive Lattices under Distributed Computation}

We now explore the program under the distributed computation model.
Suppose that we are working in a distributed computing model where $G$ is maintained in a distributed fashion.
We will assume that there are $n$ functions and the process $i$ applies the function $f_i$.
Thus, process $P_i$ keeps the most up-to-date copy of $G[i]$. 
We assume that when process $i$ applies the function $f_i$ it changes only the component $i$ that is locally stored.
It can then relay this value to other processes by sending messages.
\begin{definition}[$i$-local function]
\label{def:locality}
A function $f_i : L \to L$ is said to be \emph{$i$-local} if for all $G \in L$:
 $$
    \forall j \neq i:\quad f_i(G)[j] = G[j].
    $$
\end{definition}
Thus, if $f_i$ is $i$-local, it may update only the $i^{th}$ coordinate and leaves all other
coordinates unchanged (i.e., it does not write the value of other components).
The function $f_i$ may read the states from other processes, but can update only its own component.

When $P_i$ applies $f_i$, it has the most recent $G[i]$ but other components of $G$ may be old. Different processes may have different old values of the components for
other processes. So, Theorem \ref{thm:parallel} may not hold.
Let $L=\{0,1\}^3$, and $\bot=(0,0,0)$
ordered componentwise. Define three functions $f_1,f_2,f_3:L\to L$ by
$f_1(x_1,x_2,x_3) := (1,\,x_2,\,x_3),
f_2(x_1,x_2,x_3) := (x_1,\, x_1\vee x_2,\, x_3),
f_3(x_1,x_2,x_3) := (x_1,\,x_2,\, x_2\vee x_3).$

Then, each $f_i$ is monotone, inflationary, and $i$-local.
Consider a distributed model with three processes, where $P_i$ stores $G[i]$.
When $P_i$ executes $f_i$, it reads its own coordinate $G[i]$ fresh but may read
the other two coordinates stale. 

Start from $G_0=(0,0,0)$ and run a fair schedule that repeatedly executes all three
functions. We exhibit an execution that stabilizes at a state that is \emph{not} a common fixed point.
$P_1$ executes $f_1$ and sets $G[1]$ to $1$.
Suppose that due to asynchronous communication,
$P_2$ and $P_3$ continue to read $G[1]=0$ for an arbitrarily long time,
even though $G[1]=1$ at $P_1$. 
Whenever $P_2$ runs $f_2$, it reads its own coordinate $G[2]=0$
fresh but reads $G[1]=0$ stale. Hence, it computes
\[
f_2(0,0,0)[2] = 0\vee 0 = 0,
\]
and so it never changes $G[2]$.
Similarly, $P_3$ never changes $G[3]$.
Thus, under this fair schedule of local computations, the system stabilizes at
$G^*=(1,0,0)$. However, $G^*$ is \emph{not} a common fixed point, because
\[
f_2(G^*) = f_2(1,0,0) = (1,1,0)\neq (1,0,0).
\]
Therefore, the conclusion of Theorem~\ref{thm:parallel} fails in this distributed
stale-read model.

To restore the convergence to the least fixed-point, we add the {\em bounded staleness} assumption and strengthen the notion of fairness.

\begin{definition}[Bounded Staleness]
  A distributed system with $n$ processes $P_1, P_2, \ldots, P_n$ maintaining the global variable
  $G$ such that $G[i]$ is at $P_i$ satisfies bounded staleness with parameter $T$ if whenever $P_i$ reads
  $G$ it gets the most recent value of $G[i]$ and a value of $G[j]$ that is at most $T$ rounds old.
 
\end{definition}

\begin{definition}[Strong Fairness]  
A scheduler is strongly fair if every process is selected at least once in 
any window of $\tau$ consecutive rounds.
\end{definition}

Observe that the strong fairness of the schedule does not imply bounded staleness.
Strong fairness constrains which processes execute; bounded staleness constrains what information they observe.
Strong fairness and bounded staleness are complementary: strong fairness 
with parameter $\tau$ ensures every process executes within $\tau$ rounds, while 
bounded staleness with parameter $T$ ensures every process observes data at 
most $T$ rounds old. Together, these conditions guarantee that after 
stabilization, every process eventually operates on the true fixed point.

Since computing $f_i$ requires values from other processes, we keep the view of the $G$ vector
at each of the processes. We denote this view by $\widehat{G}^{(i)}$.
However, from the perspective of the computation, we consider the
vector $G$ with $i^{th}$ component at $P_i$. Thus, $G$ is a distributed vector.

\begin{theorem}[Distributed least fixed point theorem]
\label{thm:distributed}
Let $L$ be a finite lattice of vectors $\mathbf{N}_m^n$.
Let $\{f_i:L\to L\}_{i=1}^n$ be a set of monotone, inflationary, and $i$-local functions.
Consider an execution $G_0=\bot, G_1,G_2,\dots$ generated as follows.
At each round $t$, a strongly fair scheduler chooses a set $S_t\subseteq [n]$, and for each
$i\in S_t$ process $i$ computes its update using its view $\widehat G^{(i)}_t\in L$ satisfying bounded staleness.
The commit rule is coordinatewise:
\[
G_{t+1}[i]:= f_i(\widehat G^{(i)}_t)[i]\ \ (i\in S_t),
\qquad
G_{t+1}[j]:=G_t[j]\ \ (j\notin S_t).
\]

Then the execution terminates at the least common fixed point $G^*\in L$.
\end{theorem}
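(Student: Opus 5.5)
The plan follows the three-part structure of the proofs of Theorem~\ref{thm:interleaved} and Theorem~\ref{thm:parallel}. We show that the committed sequence is an ascending chain that stabilizes, that its limit is a common fixed point, and that every common fixed point dominates every $G_t$.

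\textbf{Monotonicity of the committed chain.} By bounded staleness, each coordinate $j\neq i$ of $\widehat G^{(i)}_t$ equals $G_s[j]$ for some $s$ with $t-T\le s\le t$, and $\widehat G^{(i)}_t[i]=G_t[i]$. We prove $G_{t+1}\ge G_t$ by induction on $t$. Suppose the chain is ascending up to time $t$. For $i\in S_t$, inflationarity applied to the view gives
\[
G_{t+1}[i]=f_i(\widehat G^{(i)}_t)[i]\ge \widehat G^{(i)}_t[i]=G_t[i],
\]
and the coordinates with $j\notin S_t$ are unchanged. This step needs only that the own coordinate is read fresh. By $i$-locality, $P_i$ writes only coordinate $i$, so there is no write contention. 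Since $L$ is finite, the chain stabilizes at some $G^*=G_{T_0}$.

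\textbf{$G^*$ is a common fixed point.} After round $T_0+T$, every view read by any process consists entirely of values $G_s[j]$ with $s\ge T_0$. Since the chain has stabilized, every such view equals $G^*$. By strong fairness, each $i$ is selected in some round $t\in[T_0+T,\,T_0+T+\tau)$. In that round $G_{t+1}[i]=f_i(G^*)[i]$. Stabilization forces this to equal $G^*[i]$. Combined with $i$-locality, which gives $f_i(G^*)[j]=G^*[j]$ for $j\neq i$, we get $f_i(G^*)=G^*$ for every $i$. This is the step where bounded staleness is essential; the three-process counterexample above shows that it fails without it. The main care point is to state precisely that a view taken at round $t$ uses only committed values from rounds in $[t-T,t]$. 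That is how the definition should be read.

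\textbf{Leastness.} Let $Y$ be any common fixed point. We show by strong induction on $t$ that $G_s\le Y$ for all $s\le t$. The base case is $G_0=\bot\le Y$. For the inductive step, each view $\widehat G^{(i)}_t$ is assembled coordinatewise from earlier committed states, each of which is $\le Y$, so $\widehat G^{(i)}_t\le Y$. Monotonicity then gives
\[
G_{t+1}[i]=f_i(\widehat G^{(i)}_t)[i]\le f_i(Y)[i]=Y[i]
\]
for $i\in S_t$, and the other coordinates are copied from $G_t\le Y$. Hence $G^*\le Y$. Since $G^*$ is a common fixed point, it is the least one.

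\textbf{Termination.} The committed state remains $G^*$ forever after $T_0$. After round $T_0+T$, all views equal $G^*$, so no process ever changes its coordinate again. The computation is therefore quiescent at $G^*$.
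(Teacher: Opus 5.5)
Your proof is correct and follows the same three-step structure as the paper's: an ascending chain from fresh own-coordinate reads plus inflationarity, a common fixed point from strong fairness and bounded staleness after stabilization, and leastness by induction using monotonicity. Your explicit choice of a round $t \in [T_0+T,\,T_0+T+\tau)$ is actually the precise form of the paper's step. The paper only picks some $t \le U+\tau$, which does not by itself ensure that the whole staleness window lies after stabilization.
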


\begin{proof}
Fix round $t$. For $i\in S_t$, the inflationary property gives $f_i(\widehat G^{(i)}_t)\ge \widehat G^{(i)}_t$,
hence
\[
G_{t+1}[i]=f_i(\widehat G^{(i)}_t)[i]\ \ge\ \widehat G^{(i)}_t[i]=G_t[i].
\]
For $j\notin S_t$, $G_{t+1}[j]=G_t[j]$. Thus, $G_{t+1}\ge G_t$ componentwise.
Since $L$ is finite, the ascending chain $G_0\le G_1\le\cdots$ stabilizes:
there exists $U$ such that $G_U=G_{U+1}=\cdots$. Let $G^*:=G_U$.

Fix any index $i$. By strong fairness, there exists $t\le U+\tau$ with $i\in S_t$.
Because the global state has been constant from time $U$ onward, every coordinate
has the same value at times $t, t+1,\dots,t+\tau$. Therefore, bounded staleness implies
$\widehat G^{(i)}_t = G^*$.
Hence,
\[
G_{t+1}[i]= f_i(\widehat G^{(i)}_t)[i]= f_i(G^*)[i].
\]
But $G_{t+1}=G_t=G^*$, so $f_i(G^*)[i]=G^*[i]$. By $i$-locality, $f_i(G^*)=G^*$.
Since $i$ is arbitrary, $G^*$ is a common fixed point.

We now show that $G^*$ is the least common fixed point.
Let $Y$ be any common fixed point. We show by induction that $G_t\le Y$ for all $t$. Notice $G_0=\bot\le Y$. Assume $G_t\le Y$.
For $i\in S_t$, bounded staleness implies that each coordinate of $\widehat G^{(i)}_t$ is
from some earlier $G_{t-T}$, and since the sequence is ascending, $\widehat G^{(i)}_t\le G_t\le Y$.
By monotonicity, $f_i(\widehat G^{(i)}_t)\le f_i(Y)=Y$, hence
$G_{t+1}[i]=f_i(\widehat G^{(i)}_t)[i]\le Y[i]$. Coordinates not in $S_t$ are unchanged,
so $G_{t+1}\le Y$. Thus $G^*\le Y$, proving the claim.
\end{proof}

In Theorem \ref{thm:distributed}, we observe that the system may have terminated (i.e., reached the least fixed point) but none of the processes may be aware of it. Hence, when we
use the distributed version, we assume a termination detection algorithm that 
checks whether the computation has terminated. Any termination detection algorithm such as 
\cite{DijkScho:Term} is sufficient.

\subsection{Transitive Closure Example}
Now, we continue with our example of transitive closure.
The distributed code for process $P_{a,b}$ is shown in Algorithm \ref{alg:transitive3}. For simplicity, we maintain the entire matrix at each process, even though it is sufficient to maintain the vectors $\widehat{G}[a,*]$ and $\widehat{G}[*,b]$.
It is easy to verify that our computation is monotone, inflationary and {\em local}.
Assuming bounded staleness and strong fairness, 
from Theorem \ref{thm:distributed}, the algorithm terminates with the transitive closure of $R$. 

\begin{algorithm}[H]
\label{alg:transitive3}
\caption{Update for reachability coordinate $(a,b)$}
\DontPrintSemicolon

\KwData{local copy $\widehat{G}$ initially $R[1\ldots n,1\ldots n]$}
\KwIn{indices $a,b$}

\lIf{$\widehat{G}[a,b] = 1$}{
    \Return
}

\For{$k \gets 1$ \KwTo $n$}{
    \If{$\widehat{G}[a,k] \wedge \widehat{G}[k,b]$}{
        $\widehat{G}[a,b] \gets 1$\;
        \textbf{send} $\widehat{G}[a,b]$ to all processes except $P_{a,b}$\;
        \textbf{break}\;
    }
}

\medskip
\textbf{upon receive} $(value)$ \textbf{from} $P_{i,j}$:\;
\Indp
$\widehat{G}[i,j] \gets value$\;
\Indm

\end{algorithm}

Table \ref{tab:assumption-necessity} summarizes the necessity of the assumptions made by Theorems \ref{thm:interleaved}, \ref{thm:parallel}, and \ref{thm:distributed}.

\begin{table}[ht]
\centering
\small
\renewcommand{\arraystretch}{1.25}
\begin{tabular}{|l|p{9cm}|}
\hline
\textbf{Assumption} & \textbf{Failure when the assumption is removed} \\
\hline

Finite lattice
& The computation may not terminate.  
On an infinite chain, an inflationary function can strictly increase
forever without reaching a fixed point. \\
\hline

Inflationary updates
& The execution may oscillate and fail to converge.
Non-inflationary monotone functions can repeatedly undo progress. \\
\hline

Monotonicity
& The execution may converge to a non-least fixed point or overshoot the least
fixed point depending on the schedule. \\
\hline

Fair scheduling
& The execution may stabilize at a state that is not a common fixed point if some
coordinates are never updated. \\
\hline

Update-only-on-change
& Concurrent writes can overwrite progress with stale values, causing
lost updates and incorrect convergence. \\
\hline

$i$-locality
& Parallel executions with overwrite semantics can destroy progress made by
other functions, even under fairness and inflationarity. \\
\hline

Bounded staleness
& With unbounded stale reads, the system may converge to a spurious fixed point
that is strictly below the least common fixed point. \\
\hline

Product lattice
& Without product structure, coordinate-wise updates may not 
preserve lattice structure, leading to undefined states. \\
\hline

\end{tabular}
\caption{Necessity of assumptions: removing any assumption leads to failure of
termination, correctness, or least-fixed-point convergence.}
\label{tab:assumption-necessity}
\end{table}

\section{Greatest Common Fixed Points}

Analogous results hold for greatest fixed points using deflationary functions starting from $\top$.

\begin{corollary}
\label{cor:deflationary-interleaving}
Let $L$ be a finite lattice $(X,\leq) $ with the bottom element $\bot_L = (0,0, \ldots, 0)$. Let $L'$ be a finite lattice $(X, \geq)$ with the bottom element $\bot_{L'} = (m,m,\dots,m)$. 
For each $i\in\{1,\dots,n\}$, let $f_i:L'\to L'$ be monotone and deflationary. Then, one can compute the greatest common fixed point $G^*$
of ${\cal F}$ on $L$ by computing the least fixed point of $F$ on $L'$. Thus, one can find the greatest common fixed point in the interleaving semantics model.
\end{corollary}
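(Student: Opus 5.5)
The plan is to reduce the corollary to Theorem~\ref{thm:interleaved} by order duality. Write $\le_L$ for the order of $L$ and $\le_{L'}$ for the order of $L'$, so that $G\le_{L'}H$ iff $H\le_L G$. Both orders live on the same finite carrier $X$. The bottom of $L'$ is $\bot_{L'}=(m,\dots,m)$, which is the top $\top_L$ of $L$, and the meets and joins are swapped. In particular $L'$ is again a finite lattice, which is the only structural hypothesis Theorem~\ref{thm:interleaved} needs.

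First I will translate the hypotheses on each $f_i$. I read ``deflationary'' as deflationary with respect to $\le_L$, meaning $f_i(G)\le_L G$. This is exactly $f_i(G)\ge_{L'}G$, so each $f_i$ is inflationary on $L'$. Monotonicity is self-dual: $G\le_L H\Rightarrow f_i(G)\le_L f_i(H)$ holds iff $H\le_{L'}G\Rightarrow f_i(H)\le_{L'}f_i(G)$. So each $f_i$ is monotone on $L'$ whether the hypothesis is read in $L$ or in $L'$.

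Next I will apply Theorem~\ref{thm:interleaved} on $L'$. Start from $G_0=\bot_{L'}=\top_L$ and apply the $f_i$ one at a time under a fair scheduler. The theorem says the execution terminates at $G^*$, the least common fixed point of $\mathcal F$ in $L'$.

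Finally I will transfer the conclusion back to $L$. The set $Z$ of common fixed points is a subset of $X$ determined only by the equations $f_i(G)=G$, so it is the same set for $L$ and for $L'$. The $\le_{L'}$-minimum of $Z$ is therefore its $\le_L$-maximum, so $G^*$ is the greatest common fixed point of $\mathcal F$ on $L$. Nonemptiness of $Z$ causes no trouble: in $L'$ the top element $\top_{L'}=\bot_L$ is a common fixed point because the functions are inflationary there, and in any case Theorem~\ref{thm:interleaved} produces a fixed point directly.

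The one real obstacle is bookkeeping about which order ``deflationary'' and ``monotone'' refer to, since the statement puts the $f_i$ on $L'$. Either reading gives inflationary and monotone on $L'$ after the dualization above, so I will state the convention explicitly and then check both conditions in a line each.
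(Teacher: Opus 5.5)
Your proposal is correct and is the order-duality reduction to Theorem~\ref{thm:interleaved} that the paper intends. The paper gives no separate proof; it only remarks that greatest fixed points come from running deflationary functions from $\top$, i.e.\ computing the least fixed point on $L'$. One small correction: your closing claim that \emph{either} reading of ``deflationary'' makes the $f_i$ inflationary on $L'$ is false. Only monotonicity is reading-independent, since $f_i(G)\le_{L'}G$ means $f_i(G)\ge_L G$, which is inflationary on $L$, not on $L'$. So the convention $f_i(G)\le_L G$ that you adopt, which is also the one used in the shortest-path applications, is what actually makes the reduction go through.
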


\pagebreak

\begin{corollary}
\label{cor:deflationary-pardist}
Let $L$ be a finite lattice $(\mathbf{N}_m^n, \leq)$  with the bottom element $\bot_L = (0,0, \ldots, 0)$. Let $L'$ be a finite lattice $(\mathbf{N}_m^n, \geq)$ with the bottom element $\bot_{L'} = (m,m,\dots,m)$. 
For each $i\in\{1,\dots,n\}$, let $f_i:L'\to L'$ be monotone and deflationary. Then, one can compute the greatest common fixed point $G^*$
of ${\cal F}$ on $L$ by computing the least fixed point of $F$ on $L'$. Thus, one can find the greatest common fixed point in the parallel with update-only-on-change, and distributed with bounded staleness and strong fairness models.
\end{corollary}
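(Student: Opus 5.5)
The plan is a pure order-duality reduction to Theorems~\ref{thm:parallel} and~\ref{thm:distributed}. First I would fix the reading of the statement. Each $f_i$ is a map on the common carrier $\mathbf{N}_m^n$. It is deflationary with respect to $\le$, i.e.\ $f_i(G)\le G$, and monotone; monotonicity with respect to $\le$ and with respect to $\ge$ are the same condition. In $L'=(\mathbf{N}_m^n,\ge)$ the order is the componentwise reverse, so $\bot_{L'}=(m,\dots,m)=\top_L$. Deflationary in $L$ then says exactly $f_i(G)\ge_{L'} G$, so each $f_i$ is monotone and inflationary on $L'$.

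Next I would check that $L'$ is of the form required by Theorems~\ref{thm:parallel} and~\ref{thm:distributed}. Apply the coordinatewise relabelling $\phi(G)[j]=m-G[j]$. This is an order isomorphism from $L'$ onto $(\mathbf{N}_m^n,\le)$ with $\phi(\bot_{L'})=(0,\dots,0)$. Conjugating, set $g_i=\phi\circ f_i\circ\phi^{-1}$. Then:
\begin{itemize}
\item each $g_i$ is monotone and inflationary on $(\mathbf{N}_m^n,\le)$;
\item $i$-locality is preserved, since $\phi$ acts coordinatewise;
\item $g_i$ changes coordinate $j$ exactly when $f_i$ does, because $\phi$ is a bijection on each coordinate, so update-only-on-change is preserved;
\item staleness and fairness are properties of the schedule and message delays, not of the order, so they transfer verbatim.
\end{itemize}
Hence any parallel (resp.\ distributed) execution of the $f_i$ from $\top_L$ is, under $\phi$, an execution of the $g_i$ from $\bot$ satisfying the hypotheses of Theorem~\ref{thm:parallel} (resp.\ Theorem~\ref{thm:distributed}). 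That execution terminates at the least common fixed point of the $g_i$, and pulling back by $\phi^{-1}$ gives the least common fixed point of the $f_i$ in $L'$.

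Finally, fixed points do not depend on the order, so the common fixed points in $L$ and $L'$ form the same set. The $\le_{L'}$-least element of that set is its $\le$-greatest element, i.e.\ the greatest common fixed point $G^*$ of ${\cal F}$ on $L$. As in Section~\ref{sec:interleaving}, this set is nonempty because $\bot_L$ is a fixed point of every deflationary map.

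The main obstacle is the lemma-level monotonicity used inside the parallel and distributed proofs. Lemma~\ref{lem:round-progress} and the staleness step $\widehat G^{(i)}_t\le G_t$ rely on reads being at least the round-start state and on old values being below current ones. After dualizing, these must become ``reads are at most the committed state'' and ``stale values are $\ge$ current ones''. I would argue that these statements are literally the images under $\phi$ of the original ones, since the execution models are defined purely in terms of reads, writes and equality tests, which $\phi$ respects. So no new argument is needed beyond verifying that the semantics commute with $\phi$.
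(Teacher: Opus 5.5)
Your proposal is correct and follows the same order-duality argument the paper intends. The paper gives no separate proof beyond the remark that deflationary functions started from $\top$ are handled by computing the least common fixed point on the reversed lattice $L'$. Your relabelling $\phi(G)[j]=m-G[j]$ makes that application of Theorems~\ref{thm:parallel} and~\ref{thm:distributed} literal, by checking that inflationarity, monotonicity, update-only-on-change, $i$-locality, staleness and fairness all transfer. You were right to read ``deflationary'' with respect to $\le$. You were also right to carry $i$-locality as a hypothesis for the distributed case, since the statement as written omits it.
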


\section{Applications}
In this section we apply Theorems~\ref{thm:interleaved}, \ref{thm:parallel}, and \ref{thm:distributed} to various fundamental algorithmic problems. 

\subsection{Stable Marriage}

As an example of the application of Theorem~\ref{thm:interleaved}, consider the implementation of the Gale-Shapley
algorithm \cite{gale1962college} for the stable marriage problem.
The stable marriage problem has $n$ men $\{m_1, \ldots, m_n\}$ and $n$ women $\{w_1, \ldots, w_n\}$. Each person has a strict preference ordering over members of the opposite gender. Let $pref[m][]$ denote the preference list of man $m$, and $rank[w][i]$ denote the rank of man $m_i$ in woman $w$'s preference list (a lower rank means more preferred).
For each man $m_i$, let $G[i] \in \{0, 1, \ldots, n+1\}$ denote his current proposal position in his preference list, where $G[i] = k$ means $m_i$ is currently proposing to his $k$-th choice (with $G[i] = 0$ meaning he has not yet proposed to anyone, and
$G[i] = n+1$ meaning that he has already proposed to all women and has been rejected by all.). The global state is
$$G \in L = \{0, 1, \ldots, n+1\}^n,$$
ordered componentwise.
In some implementations, we also keep $partner$ for each woman $w$. 

\begin{algorithm}[ht]
\label{alg:stable1}
\caption{Function $f(i)$ for proposal.
The function is synchronized.}
\DontPrintSemicolon

\KwIn{index $i$}
\KwData{$G[1 \ldots n]: 0 \ldots n+1$ // current proposal by man $i$ initially $0$}
\KwData{$partner[1 \ldots n]$ // current partner for woman initially $0$}
\KwData{$mpref[1 \ldots n][1 \ldots n]$ // read only: list of preferences of man $i$}
\KwData{$rank[1 \ldots n][1 \ldots n]$ // read only: ranking by women}

lock($G$); \\
\Indp

\lIf{$G[i] = 0$}{
    $G[i] \gets G[i]+1$
    \Return
}
\lIf{$G[i] > n$ }{
    \Return ~ // beyond the preference list
}

int $w \gets mpref[i][G[i]]$\\
\lIf{$(partner[w] = 0)$} 
    {$partner[w] \gets i$}
\Else{
   \lIf{$rank[w][i] < rank[w][partner[w]]$}
       {$partner[w] \gets i$}
    \lElse { $G[i] \gets G[i] + 1$}
    }
\Indm
unlock($G$);
\end{algorithm}

Algorithm~\ref{alg:stable1} gives a parallel interleaving implementation using locks.
The implementation is quite standard with locks being used to ensure correctness.
The algorithm will terminate with the man-optimal stable matching with the usual
arguments \cite{gale1962college}.

\begin{algorithm}[ht]
\label{alg:stable2}
\caption{Function $f(i)$ for proposal.
The function is not synchronized. The idea is an adaptation of Gale-Shapley's algorithm with non-interleaving parallelism.}
\DontPrintSemicolon


\lIf{$G[i] = 0$}{
    $G[i] \gets G[i]+1$
    \Return
}
\lIf{$G[i] > n$ }{
    \Return ~// beyond the preference list
}
int $w \gets mpref[i][G[i]]$\\
\For{int $j \gets 1$ \KwTo $n$, $j \neq i$}{ 
    \If{$(mpref[j][G[j]] = w) \wedge (rank[w][j] < rank[w][i])$}{ \label{code:G}
        $G[i] \gets G[i]+1$\;
        \Return
    }
}

\end{algorithm}

Now suppose that we want to solve the problem without using locks. 
Consider the implementation shown in Algorithm \ref{alg:stable2}. Notice that Algorithm \ref{alg:stable2} is inflationary and monotone.
The reader can verify that it works when the code is executed under a lock (i.e. with 
interleaving parallelism). We show that 
Algorithm \ref{alg:stable2} works even when it is not under any synchronization. In particular, while the condition at line \ref{code:G} is being checked, $G[j]$
may have changed. The variables $w$ and $k$ are local to the process $P_i$.
From Theorem \ref{thm:parallel}, it follows that computing $f_i$ in a non--interleaving fashion
will still yield the man-optimal stable marriage.
Algorithm \ref{alg:stable2} is correct under the non-interleaving round-based execution model of Theorem \ref{thm:parallel}, which assumes update-only-on-change.

Notice that Algorithm \ref{alg:stable2} is also $i-$local. Hence, we can also use its distributed version where $P_i$ maintains $G[i]$ assuming bounded staleness. Each process maintains a copy of the
$G$ vector, $\widehat{G}$ and sends $\widehat{G}[i]$ whenever it changes. It also receives
$\widehat{G}[j]$ for $j \neq i$ and updates it whenever a message is received. From Theorem \ref{thm:distributed}, we know that the system will converge to the man-optimal stable marriage.
As before, we can use a termination detection algorithm to detect when the system has converged.






We now discuss a slight variant of the stable marriage problem. 
Suppose we are interested in only that stable marriage in which man $m_1$ is married to $w_1$.
Observe that depending on $mpref$ and $rank$, there may not be any such stable marriage.
We have to make the following change to the program.
When man $m_1$ proposes to any woman other than $w_1$ or, if the woman $w_1$ is proposed by someone that she prefers to $m_1$, then  $m_1$ simply moves forward
(i.e. $G[1]$ is incremented). In the case there is no stable marriage, $P_1$ will move to $G[1]$ equal to $n+1$. 

\subsection{Shortest Paths}
Assume we have a weighted, directed graph $G = (V, E, w)$ on $n$ vertices with non-negative edge weights specified by the weight function $w$. We use $w[i,j]$ to denote the weight of edge $(i,j)$.
Consider a lattice $L$ of vectors of size $n$ where $G[i]$ denotes the length of the path from the vertex $v_1$ (source vertex) to $v_i$.
Let $T$ be the largest edge weight in the graph. This means that we have a path 
of weight at most $T(n-1)$ for any pair of vertices.
We initialize $G[i]$ to $w[1,i]$ when there is an edge from $v_1$ to $v_i$ and to
$nT$ when there is no edge from $v_1$ to $v_i$.
Notice that this lattice uses the componentwise $\geq$ operation instead of the usual $\leq$ operation.
The following simple edge-relaxation algorithm will converge to the length of the shortest path
from vertex $1$ to $i$ for every vertex $i$.
\begin{algorithm}[ht]
\label{alg:edge-relaxation}
\caption{ The function $f(i)$ for edge relaxation. It is not synchronized.}
\DontPrintSemicolon
\KwData{$G[1\ldots n]$ initially $G[i] = w[1,i]$;}
\lIf {$ \exists k: G[i] >  G[k] + w[k,i] $}{  $G[i] \gets G[k] + w[k,i] $}
\end{algorithm}

Algorithm \ref{alg:edge-relaxation} applies edge relaxation to compute shortest paths. 
Each function $f_i$ is deflationary, monotone, and $i$-local. 
By Corollary \ref{cor:deflationary-pardist}, the algorithm converges to the shortest path lengths. However, convergence may take exponential time in the worst case.

We now formulate the Bellman-Ford algorithm in the parallel non-interleaving model~\cite{CLRS2009}.
Let lattice $L$ be on the vectors $G$ of pairs, where $G[i].dist$ denotes the length of a path from 
vertex $v_1$ (source vertex) to $v_i$ and $G[i].level$ denotes the number of times the function $f_i$ has computed an edge-relaxation for $v_i$. 
Algorithm \ref{alg:Bellman-Ford} gives a parallel non-interleaving implementation of the Bellman-Ford algorithm.
The function $f_i$ checks if some other vertex $k$ is \emph{lagging}; the number of times an edge relaxation has been computed for $k$ is less than that of $i$. In that case, it simply returns and waits for the level number of that vertex to increase. Otherwise, it performs an edge-relaxation for all the incoming edges to $i$. The variable $pre(i)$ denotes the set of all vertices $k$ such that $(k,i)$ is an edge.
From Corollary \ref{cor:deflationary-pardist}, it follows that Algorithm \ref{alg:Bellman-Ford} converges to the
cost of the shortest paths to all vertices.

\begin{algorithm}[ht]
\label{alg:Bellman-Ford}
\caption{ The function $f(i)$ for Bellman-Ford Algorithm. It is not synchronized.}
\DontPrintSemicolon
\KwData{$G[1\ldots n]$ initially $G[i].dist = w[1,i]$; $G[i].level := 1$;}
\lIf {$\exists k: G[k].level < G[i].level $}
    { \Return ~ // some vertex lagging behind} 
\For {$k \in pre(i)$} {
    \lIf{$G[i].dist > G[k].dist + w[k,i]$ }{ $G[i].dist \gets G[k].dist + w[k, i]$} 
}
$G[i].level \gets G[i].level + 1;$
\end{algorithm}

Now consider the Floyd-Warshall Algorithm~\cite{CLRS2009}. 
Let lattice $L$ be of vectors of size $n \times n$ where $G[i,j]$ denotes the length of the path from the vertex $v_i$ to $v_j$.
We initialize $G[i,j]$ to $w[i,j]$ with $G[i,i]$ equal to $0$ and
$G[i,j]$ to $nT$ when there is no edge from $v_i$ to $v_j$. 

The non-interleaving function is specified as Algorithm \ref{alg:Floyd-Warshall}.\\
\begin{algorithm}[ht]
\label{alg:Floyd-Warshall}
\caption{$f(i,j)$ for Floyd-Warshall Algorithm.
It is not synchronized.}
\DontPrintSemicolon
\KwData{$G[1\ldots n,1\ldots n]$ initially $G[i,j] = w[i,j]$}
\lIf{$\exists k: G[i, j] > G[i, k] + G[k, j]$ }{ $G[i, j] \gets G[i, k] + G[k,j]$}
\end{algorithm}
Algorithm \ref{alg:Floyd-Warshall} implements Floyd-Warshall with non-interleaving semantics. The 
function $f(i,j)$ is deflationary, monotone, and $(i,j)$-local. By Corollary \ref{cor:deflationary-pardist}, the algorithm computes all-pairs shortest paths
correctly.

Finally, consider a graph with negative edge weights but no cycle with negative cost.
Algorithm \ref{alg:Johnson} computes the price vector $G$ such that 
if $w[i,j]$ is replaced with $w[i,j] + G[j] - G[i]$ then all edge weights are non-negative.
\begin{algorithm}[ht]
\label{alg:Johnson}
\DontPrintSemicolon
\KwData{$G[1\ldots n]$ initially $\forall i: G[i] = 0$ // $G$ is the price vector}
\lIf {$ \exists k: G[i] <  G[k] - w[i,k] $}{  $G[i] \gets G[k] - w[i,k] $}
\caption[]{The function  $f(i)$ for Johnson's Algorithm.
It is not synchronized.}
\end{algorithm}

These examples illustrate that many classic dynamic programming and graph algorithms can be viewed as monotone fixed-point computations on product lattices.
Observe that all the algorithms for the shortest path calculation satisfy $i$-locality.
Hence, assuming strong fairness and bounded staleness, they can also be implemented in a distributed system.
\subsection{Fair Division with Subsidy}

In this section, we show how to apply our results towards the fair division with subsidy problem~\cite{halpern2019fair}. 
We have a set $\mathcal{N}$ of $n$ agents, a set $\mathcal{M}$ of $m$ items and a pool of subsidy money. An allocation $X$ is an assignment of items to agents where $X_i$ is the set of items that $i$ receives.  Each agent $i$ has a valuation function $v_i$ and agent $i$'s value for agent $j$'s bundle is $v_i(X_j)$. An \textit{allocation with payments} $(X,p)$ is a tuple of an allocation $X$ and a payment vector $p = (p_1, \dots, p_n)$, where $p_i$ is the payment to agent $i$. Under such an allocation with payments $(X, p)$, agent $i$'s utility is $v_i(X_i) + p_i$. An allocation with payments $(X,p)$ is \emph{envy-free} and $p$ is \emph{envy-eliminating} if $v_{i}(X_i) + p_i \geq v_{i}(X_j) + p_j$ for all agents $i,j \in \mathcal{N}$. An allocation $X$ is \emph{envy-freeable} if there exists a payment vector $p$ such that $(X,p)$ is envy-free. 

\cite{halpern2019fair} prove that, given an envy-freeable allocation $X$, the set of all envy-eliminating vectors forms a distributive lattice under component wise maximum and minimum.

We will show how to apply Theorems~\ref{thm:interleaved}, \ref{thm:parallel}, and \ref{thm:distributed}, to compute an envy-eliminating payment vector given an envy-freeable allocation $X$. Let $\Delta$ be the largest $v_i(X_j)$ value for all $i,j \in \mathcal{N}$. The global state is:
\[
G \in L := \{0, 1, \dots, m\Delta\}^{n}
\] and corresponds to the subsidy vector $p$ (i.e $G[i] = p_i$). We start initially with 
$G = \{0\}^n$ and increase agents' subsidy until the resulting allocation with payments is envy-free. The function $f_i$ is given in Algorithm~\ref{alg:rental}. 

\begin{algorithm}[ht]
\label{alg:rental}
\caption{Function $f_i$ for computing an envy-eliminating subsidy vector.}
\DontPrintSemicolon
\KwData{$G[1\ldots n]$ initially $\forall i: G[i] = 0$ // $G$ is the subsidy vector}
int \textit{maxEnvy} $= 0$\;
\For{int $j \gets 1$ \KwTo $n$, $j \neq i$}{ 
    \If{$v_i(X_j) + G[j] - (v_i(X_i) + G[i]) >$ maxEnvy}{ 
        $maxEnvy \gets v_i(X_j) + G[j] - (v_i(X_i) + G[i])$\;
    }}
$G[i] \gets G[i] + maxEnvy$\\
\Return
\end{algorithm}

Notice that the function $f_i$ is \emph{inflationary} since we only increase agents' subsidy amount. We briefly show why each $f_i$ is also monotone. Notice that if the subsidy vector $G \leq H$, then we have that $\forall i: G[i] \leq H[i]$. When we apply any $f_i$ to $G$, we will add agent $i$'s maximum envy to its subsidy value, $G[i]$. Without loss of generality, let agent $j$ be the agent $i$ is most envious of. After applying $f_i$ to $G$, the only index that changes is $G[i]$ and it becomes $G[i] + (v_i(X_j) + G[j] - (v_i(X_i) + G[i]))$. Simplifying, we get, $(v_i(X_j) + G[j] - v_i(X_i))$ which is less than or equal to $(v_i(X_j) + H[j] - v_i(X_i))$ since $G \leq H$. This is exactly the subsidy payment to agent $i$ in $f_i(H)$. 

Theorems~\ref{thm:interleaved} and \ref{thm:parallel} result in parallel algorithms for computing envy-eliminating subsidy vectors given an envy-freeable allocation $X$. Since Algorithm~\ref{alg:rental} only modifies $i$'s subsidy payment, it is $i$-local and we can apply Theorem~\ref{thm:distributed} to get a distributed algorithm for computing envy-eliminating subsidy vectors assuming bounded staleness.

\section{Related Work}
Knaster-Tarski's fixed-point theorem states that any monotone function on a complete lattice
has a least fixed point, characterized as the meet of all pre-fixed points
(and dually for greatest fixed points) \cite{tarski1955}.
Kleene's theorem gives a constructive characterization of the least fixed point lfp$(F)$, 
\(
\mathrm{lfp}(F)=\bigvee_{k\ge 0}F^k(\bot)
\)
under additional continuity assumptions ($\omega$-continuity) \cite{kleene1952}.
Theorem~\ref{thm:interleaved} works with a \emph{family} of functions
$\{f_i\}$ and an arbitrary fair schedule, but it assumes (i) \emph{inflationary} updates
to ensure an ascending chain, and (ii) \emph{finiteness} of $L$ to guarantee
termination without requiring continuity. 

Cousot \& Cousot show that systems of monotone equations
over complete lattices can be solved by \emph{chaotic iterations}, where components are
updated in an arbitrary order, under suitable fairness
hypotheses \cite{cousot1977sigplan,cousot1977rr}.
These results are widely used to justify worklist algorithms for dataflow analysis.
Our theorems are different in multiple ways.
We assume our lattices to be finite and distributive. We do not use continuity of our functions.
Second, the standard chaotic-iteration setting typically models component updates that do not
lose information, conceptually a join-based merge, which is why parallel
overwrites can break convergence unless the merge is monotone. Since we are not using join-based merge, 
their results are not applicable in our setting. Also, we address computations where different processes
may have possible different views of the global state.
We observe that Conflict-free replicated data types (CRDTs)~\cite{shapiro2011crdt,shapiro2011comprehensive}
provide strong eventual consistency guarantees for distributed systems by
restricting updates to be inflationary over a join-semilattice and resolving
concurrent updates using a join operation.
As a result, replicas converge deterministically despite arbitrary message
reordering or duplication.
This model is closely related to join-based chaotic iteration in abstract
interpretation, but differs fundamentally from the overwrite semantics studied
in this paper. CRDTs thus fall under Cousot \& Cousot in Table~\ref{tab:assumptions}, as they rely on join-based merge semantics rather than overwrite-based updates.
In summary, our results do not imply results by Cousot \& Cousot, and their results do not imply our results.

Early results by Chazan and Miranker~\cite{chazan1969} and 
Baudet \cite{baudet1978} studied asynchronous iterative methods for multiprocessors, providing conditions for convergence of block-iterative fixed-point computations under
asynchrony.
Bertsekas analyzed distributed asynchronous computation of fixed points under models
with communication delays and partial asynchronism \cite{bertsekas1983}.
The monograph by Bertsekas and Tsitsiklis~\cite{bertsekas1989} provides a
comprehensive treatment of such models.
These works are typically formulated for iterative methods
$x^{t+1}=F(x^t)$  in analytic or order-theoretic settings, with
assumptions such as contraction or bounded delays. Our setting is purely
order-theoretic on a finite lattice: termination is guaranteed by the ascending-chain
stabilization, and fairness is the key scheduler assumption.

Bounded-staleness execution models arise in modern distributed and
parallel systems, particularly in large-scale machine learning.
The Stale Synchronous Parallel (SSP) model~\cite{cipar2013} and related systems
allow bounded stale reads while preserving convergence guarantees.
Although these works focus on optimization objectives rather than exact
fixed points, they provide practical motivation for our bounded-staleness
assumptions.
Modern graph processing systems such as Pregel \cite{Malewicz2010}, 
GraphLab \cite{Low2012}, and PowerGraph \cite{Gonzalez2012} use 
bulk-synchronous or asynchronous execution models for distributed graph 
computation. These systems often rely on commutative updates or eventual 
consistency rather than exact fixed-point convergence. Our theorems provide 
formal guarantees for a related but distinct model with overwrite semantics.
\section{Conclusions and Future Work}

We have presented three convergence theorems for computing least common 
fixed points under progressively relaxed synchronization: interleaving semantics
(Theorem \ref{thm:interleaved}), parallel with non-interleaving semantics (Theorem \ref{thm:parallel}), and distributed with 
bounded staleness (Theorem \ref{thm:distributed}). Our key insight is that {\em update-only-on-change} 
enables parallel convergence with overwrite semantics, while bounded 
staleness with {\em locality} extends this to distributed settings with stale reads. 
Applications include the transitive closure, stable marriage, shortest paths, and fair division with subsidy problems. 

Future work includes deriving convergence time bounds, extending to 
infinite lattices with appropriate continuity assumptions, and experimental 
evaluation on parallel and distributed systems. It would also be interesting to weaken the assumptions in our Theorems. 


\bibliography{fmaster}
\pagebreak
\appendix
\section{Appendix}

\subsection{Notation and Definitions}

\begin{table}[ht]
\centering
\small
\renewcommand{\arraystretch}{1.25}
\begin{tabular}{|l|p{9.5cm}|}
\hline
\textbf{Symbol} & \textbf{Meaning} \\
\hline

$L$ 
& A finite lattice; later specialized to a finite product lattice
$L=\prod_{i=1}^n L_i$ or a finite distributive lattice of ideals. \\
\hline

$\le$ 
& Partial order on $L$ (componentwise order in product lattices). \\
\hline

$\bot$, $\top$
& Bottom and top elements of the lattice $L$. \\
\hline

$G, G_t, G^*$ 
& Global state of the computation; $G_t$ is the state at round $t$,
$G^*$ is the terminal (fixed-point) state. \\
\hline

$G[i]$ 
& $i$-th coordinate of the global state vector $G$. \\
\hline

$f_i : L \to L$
& Update function associated with coordinate (or process) $i$. \\
\hline

$F$ 
& Set of update functions $\{f_1,\dots,f_n\}$. \\
\hline

$i$-local 
& Property of a function $f_i$ that it may update only coordinate $i$:
$f_i(G)[j]=G[j]$ for all $j\neq i$. \\
\hline

Inflationary
& Property of a function $f$: $f(G)\ge G$ for all $G\in L$. \\
\hline

Monotone
& Property of a function $f$: $G\le H \Rightarrow f(G)\le f(H)$. \\
\hline

$S_t$
& Set of indices whose functions are executed in parallel during round $t$. \\
\hline

Fair schedule
& Scheduling assumption that each index $i$ appears in $S_t$ infinitely often. \\
\hline

Strong fairness
& Stronger fairness: each index appears at least once in every bounded window
of rounds (used with bounded staleness). \\
\hline

Update-only-on-change
& Execution rule: a function writes to coordinate $j$ only if the value
of $j$ actually changes. \\
\hline

Bounded staleness ($T$)
& Assumption that reads of $G[j]$ return values at most $T$ rounds old. \\
\hline


$n$
& Number of coordinates / processes / functions. \\
\hline

$m$
& Upper bound on component values in vector lattices.\\
\hline

$R[a,b]$
& Boolean reachability variable in the transitive-closure application. \\
\hline

$a,b,k$
& Indices used in reachability updates and algorithmic examples
(reused locally). \\
\hline

$mpref, rank$
& Preference lists and ranking matrices in the stable marriage application. \\
\hline

partner
& Current tentative partner of a woman in the stable marriage. \\
\hline


\end{tabular}
\caption{Summary of notation used throughout the paper. Symbols marked as reused
appear in multiple sections with consistent meaning.}
\label{tab:notation}
\end{table}

\subsection{Proofs missing from Section~\ref{sec:interleaving}}

\begin{proof}[Proof of Theorem~\ref{thm:assumptions}]
    We analyze each assumption independently.
    
    We require $L$ to be finite; otherwise, even with inflationary and monotone functions, a fair scheduler $S$ and a starting point $G_0 = \bot$, convergence is not guaranteed and common fixed points may not exist. For example, let $L=\mathbb{N}$, $G_0 = 0$, with the usual order and define
$f(x)=x+1$. Since, we only have one function, any scheduler is fair. The function $f$ is inflationary and monotone, but it has no fixed
point.

Fairness is necessary to ensure convergence to a \emph{common} fixed point.
Let $m=2$ and $L=\{0,1\}^2$ with $G_0 = \bot=(0,0)$. Define
$f_1(x_1,x_2)=(1,x_2)$, and
$f_2(x_1,x_2)=(x_1,1).$
$L$ is finite, we start from the bottom element of $L$, and both functions are monotone and inflationary.
If the scheduler applies only $f_1$ forever (and never applies $f_2$), the execution is $
(0,0)\xrightarrow{f_1}(1,0)\xrightarrow{f_1}(1,0)\xrightarrow{f_1}\cdots
$
which stabilizes at $(1,0)$. However, $(1,0)$ is not a common fixed point since
$f_2(1,0)=(1,1)\neq(1,0)$. Thus, fairness is required.

The theorem assumes the execution starts from $G_0=\bot$. If the execution starts from an arbitrary state $G_0$, the same argument shows
that it converges to the \emph{least common fixed point above $G_0$}, which need
not be the least common fixed point of the entire lattice.

Monotonicity is necessary. Consider the lattice $L=\{0,1,2\}$ under the natural order.
Define $f_1(0)=2,\quad f_1(1)=1,\quad f_1(2)=2$, and
$f_2(0)=1,\quad f_2(1)=1,\quad f_2(2)=2$.
The function $f_1$ is not monotone but it is inflationary, while $f_2$ is monotone and inflationary.
The common fixed points of $f_1$ and $f_2$ are $\{1,2\}$, so the least common fixed
point is $1$. However, starting from $\bot=0$, applying $f_1$ followed by $f_2$
yields the state $2$, showing that without monotonicity the execution 
may converge to a common fixed point that is not the least one.

Finally, the inflationary assumption is required to guarantee the existence of a common fixed point.
Let $L=\{0,1\}$ under the natural order, and define
$f_1(x)=0$,and $f_2(x)=1$.
$f_2$ is monotone and inflationary while $f_1$ is monotone but not inflationary. Notice that neither $0$ nor $1$ are common fixed points in $L$.
Under a fair schedule that alternates between $f_1$ and $f_2$, the execution starting
from $0$ toggles between $0$ and $1$ indefinitely and does not terminate.

\end{proof}

\end{document}